\def \beq {\begin{equation}}
\def \eeq {\end{equation}}
\def \ba {\begin{array}}
\def \ea {\end{array}}
\def\bea{\begin{eqnarray}}
\def\eea{\end{eqnarray}}
\def \dis {\displaystyle}
\newcommand{\AB}[1]{\textcolor{Purple}{{\bf #1}}}
\newtheorem{theorem}{Theorem}[section]
\newtheorem{lemma}[theorem]{Lemma}
\newtheorem{proposition}{Proposition}[section]
\begin{document}

\title{Maximal CHSH violation for qubit-qudit states }

\author{Alexander Bernal}\email{alexander.bernal@csic.es}\affiliation{Instituto de Física Teórica, IFT-UAM/CSIC, Universidad Autónoma de Madrid, Cantoblanco, 28049 Madrid, Spain}
\author{J. Alberto Casas}\email{j.alberto.casas@gmail.com}\affiliation{Instituto de Física Teórica, IFT-UAM/CSIC, Universidad Autónoma de Madrid, Cantoblanco, 28049 Madrid, Spain}
\author{Jes\'us~M. Moreno}\affiliation{Instituto de Física Teórica, IFT-UAM/CSIC, Universidad Autónoma de Madrid, Cantoblanco, 28049 Madrid, Spain}

\begin{abstract}
We evaluate the maximal CHSH  violation for a generic  (typically mixed) qubit-qudit state, 
obtaining easily computable expressions in arbitrary qudit dimension. This represents the optimal (2-2-2) Bell nonlocality for this kind of systems.
The work generalizes the well-known Horodeckis's result for a qubit-qubit  setup. We also give simple lower and upper bounds on that violation. We apply our general results to address a number of issues. Namely, we obtain a bound on the degree of purity required in a system to exhibit nonlocality and study the statistics of nonlocality in random density matrices. Besides, we show the impossibility of improving the amount of CHSH-violation by embedding the qudit in a Hilbert space of larger dimension. Finally, the general result is illustrated with a family of density matrices in the context of a qubit-qutrit system. 
\end{abstract}

\maketitle


\tableofcontents

\section{Introduction}
\label{sec:Intro}

Violation of Bell-like inequalities represents a crucial test of the character of the fundamental laws of nature, as it is incompatible with local-realism and in particular with local hidden-variables theories. The most popular variant of these inequalities is the CHSH version \cite{Clauser:1969ny}. 
Namely, given a bipartite system where each party, Alice and Bob, performs two binary-outcome measures (usually referred to as a (2-2-2) setup \cite{Brunner:2013est}), the joined probability distribution of the outcomes given the inputs is compatible with local realism if and only if
\begin{equation}
    E (0,0)+ E (0,1)+E (1,0)-E (1,1) \leq 2,
\end{equation}
where we denote $E(x,y)= P(1,1|x,y)-P(1,-1|x,y)-P(-1,1|x,y)+P(-1,-1|x,y)$ and $P(a,b|x,y)$ is the probability of getting outcomes $a,b=\pm1$ given inputs $x,y=0,1$. This statement over probabilities is equivalent, in the operator picture, to $\langle {\cal O}_{\rm Bell}\rangle\leq 2$ with
${\cal O}_{\rm Bell}=A \otimes (B+B') + A'\otimes (B-B')$ and $A,A',B,B'$ Hermitian operators with eigenvalues contained in $[-1,1]$. As it is well known, quantum mechanics can violate this CHSH inequality for certain entangled states. More precisely, if the state is pure there is always a choice of $A, A', B, B'$ which violates CHSH \cite{GISIN1991201}. If it is a mixture, that is not guaranteed \cite{HORODECKI1995340}.

Of course, given a  quantum  state, the amount of potential Bell violation depends on the choice of the four $A, A', B, B'$ observables. It was shown in ref.\cite{HORODECKI1995340} that for a generic qubit-qubit state, $\rho$, expressed as
\begin{eqnarray}
\rho  =  \frac{1}{4}  &&
\left( \mathbb{1}_2\otimes \mathbb{1}_2 +\sum_i(B_i^+ \sigma_i\otimes \mathbb{1}_2 + B_i^- \mathbb{1}_2\otimes \sigma_i)  + \right. \nonumber \\
 &&  \phantom{xxxx} \left.+ \sum_{ij}  C_{ij} \sigma_i\otimes \sigma_j \right) 
\label{generalrho2qu}
\end{eqnarray}
(with $B_i^\pm, C_{ij}$ real coefficients), the maximum value of $\langle {\cal O}_{\rm Bell}\rangle$ is given by
\begin{eqnarray}\label{CHSH-abC_max} \nonumber
    \expval{{\cal O}_{\rm Bell}}_{\max}
 = &&\max_{{A},{A'},{B},{B'}}\expval{A \otimes (B+B') + A'\otimes (B-B')}\\
=&&\, 2\sqrt{\kappa_1+\kappa_2}\ ,
\end{eqnarray}
where $\kappa_1,\kappa_2$ are the largest eigenvalues of $C^T C$. The authors provided also the explicit choice of $A, A', B, B'$ leading to this maximum value. In this way one can easily check whether a (qubit-qubit) state generates probability distributions incompatible with local realism. Complementary results are derived when only fixing spectral properties of the density matrix $\rho$, such as the eigenvalues or the purity. For these cases, the maximum value $\expval{{\cal O}_{\rm Bell}}_{\max}$ for a given (pure or mixed) state has been analytically obtained \cite{Verstraete_2001,Verstraete_2002,Batle_2011}.

For a general quantum system, beyond the qubit-qubit case, things get much more involved. As a matter of fact, there are not closed expressions for the facets defining the region (polytope) of probability distributions compatible with local realism (also known as tight Bell inequalities). Apart from the CHSH inequalities, whose liftings are always tight \cite{Pironio_2005}, the celebrated CGLMP inequalities \cite{Collins:2002sun} also represent some facets of such polytope, but in general they do not provide a complete description of them. Nevertheless, for Bell scenarios consisting of two parties (Alice, Bob) in which Alice performs two measurements with two outcomes each, it was shown by Pironio \cite{Pironio_2014} that all the facets defining the associated ``classical" polytope are given by CHSH-type inequalities. However, this does not solve the problem of determining the maximum CHSH violation for a given qubit-qudit  state $\rho$, and thus whether the probabilities of physical observables of the system in the CHSH scenario can be described by a classical (local-realistic) theory. 
In particular, Eq.(\ref{CHSH-abC_max}) cannot be extrapolated to higher dimension. Then, in order to get the optimal CHSH violation for a given state of a qubit-qudit system, one should explore all possibilities for the $A, A', B, B'$ observables, a very expensive computational task, as it involves a large number of parameters, which grows rapidly as the dimension of the qudit increases.
On the other hand, qubit-qudit systems are of significant interest for both quantum applications and fundamental studies \cite{PhysRevA.62.032310,PhysRevA.61.062302,PhysRevA.81.062102,Chatterjee:2023}.
Concerning the latter, these systems offer a great theoretical laboratory to explore how quantum properties, such as (Bell) nonlocality, scale beyond the minimal qubit-qubit case, revealing dimensionality-driven effects. 
We give some examples in this paper, e.g. examining the statistics of nonlocality for random  qubit-qudit states; or determining the 
 minimum degree of purity required to present non-local correlations. Also, qubit-qudit systems provide relevant tests of quantum mechanics. For example, there has been a recent effort to certificate the presence of both entanglement and non-locality in high-energy processes at particle colliders \cite{Afik:2025ejh}. 
 In this sense, it remains an open challenge to demonstrate that hybrid states, such as (top quark, $W$ boson) pairs produced at the LHC, exhibit Bell nonlocality. The results of this study facilitate the optimization of this task, as they provide the optimal (2-2-2) test.
 
Qubit-qudit systems are also of interest in quantum information processing, e.g. they enable advanced protocols, such as hybrid quantum error correction, that significantly enhance error correction capabilities \cite{Chizzini:2022zlg}. In general, 
Bell nonlocality in these systems provides a robust method for entanglement certification, which is an additional motivation for the development of optimized nonlocality tests.

In this paper we address the task of evaluating the maximal CHSH violation for a generic qubit-qudit system, obtaining easily computable expressions. 

In section \ref{sec:2} we present our approach to the problem and the general result for maximal CHSH violation in qubit-qudit systems, which is the main result of this paper. 
In section \ref{sec:applications} we apply our general results to address a number of issues. 
In section \ref{sec:3} we give  simple lower and upper bounds on $\expval{{\cal O}_{\rm Bell}}$, which respectively represent sufficient and necessary conditions for violation of local realism. 
In section \ref{sec:purity} we use our general results to obtain a bound on the degree of purity required for nonlocality. Section \ref{sec:random} is devoted to study the statistics of nonlocality in random density matrices.
In section \ref{sec:4} we examine the possibility of improving the amount of Bell-violation by embedding Bob's Hilbert space in one of larger dimension.
In section \ref{sec:5} we illustrate our results by studying a family of density matrices in a qubit-qutrit system. Finally, in section \ref{sec:6} we summarize our results and conclusions.

\section{Main result}\label{sec:2}

Let us consider a general qubit-qudit system, with Hilbert space $ {\cal H}_2 \otimes {\cal H}_d$. Any 
$2d\times 2d$ density matrix in this space can be unambiguously expressed as
\begin{equation}
    \rho=\frac{1}{2}\left[\mathbb{1}_2\otimes\beta_0+\sigma_1\otimes\beta_1+\sigma_2\otimes\beta_2+\sigma_3\otimes\beta_3\right],
    \label{rho}
\end{equation}
where $\sigma_i$ are the standard Pauli matrices
and $\{\beta_0,\vec \beta\}=\{\beta_0,\beta_i\}_i$ are $d\times d$ Hermitian matrices. In particular, the  $\beta_0$  matrix coincides with Bob's reduced density matrix, $\rho_B=\Tr_A \rho=\beta_0$, and therefore verifies  $ \Tr \beta_0  = 1$.\footnote{The rest of the $\beta$  matrices are  easily obtained  by $\beta_i  =  \Tr_A \left(\rho\, (\sigma_i \otimes \mathbb{1}_d   ) \right)$.} Besides, the $\beta-$matrices must  lead to a positive semidefinite $\rho$ matrix; other than that they are arbitrary. The previous expression is a kind of Schmidt decomposition of the $ {\cal H}_2 \otimes {\cal H}_d$ density matrix.
  
Following the result obtained by Pironio \cite{Pironio_2014}, we know that in a bipartite scenario in which one party has binary inputs and outcomes all the facets defining the polytope of Local Hidden Variables (LHV) are given by CHSH-type inequalities . In other words, all the ``tight" Bell-like inequalities (those whose violation is a sufficient and necessary condition to violate local realism in this bipartite Bell scenario) can be written as CHSH-type inequalities. Following the reasoning detailed in Appendix \ref{POVMAppend}, the CHSH inequalities can always be written as $|\langle \mathcal{O}_{\rm Bell}\rangle|\leq 2$, with
\begin{equation}
 {\cal O}_{\rm Bell} = A \otimes (B+B') + A'\otimes (B-B') \ .
 \label{OBell}
 \end{equation}
Here $A$ and $A'$ ($B$ and $B'$) are $2\times2$ ($d\times d$) linear Hermitian observables with eigenvalues $\{+1, -1\} $ ($\{+1, -1\} $ with some degeneracy). This set of eigenvalues arises by considering that each party in the CHSH scenario performs projection-valued measures (PVMs) instead of general positive operator-valued measures (POVMs). Nonetheless, it is known that in binary-outcome scenarios it is sufficient to consider PVMs in order to get the maximal violation \cite{cleve2010consequenceslimitsnonlocalstrategies}. The expectation value of the Bell operator  $\expval{{\cal O}_{\rm Bell}} = \Tr{\rho  {\cal O}_{\rm Bell}}$ reads
\begin{equation}\label{<OB>}
    \begin{aligned}
       \expval{{\cal O}_{\rm Bell}} =&\frac{1}{2}\sum_{i=1}^3 \left(\Tr{\sigma_i A} \, \Tr{\beta_i (B+B')}\right. \\
       &\left.+\Tr{\sigma_i A'} \, \Tr{\beta_i (B-B')} \right).
    \end{aligned}
\end{equation}
As it is well known, for local realistic theories $ \langle {\cal O}_{\rm Bell} \rangle   \leq 2$, while in quantum theories it can reach the Tsirelson's bound $2 \sqrt{2}$ \cite{Cirelson:1980ry}.
Our goal is to find the maximal value:
\begin{equation}
 \mathcal{B}= \max_{A,A',B,B'} \langle {\cal O}_{\rm Bell} \rangle\ .
\label{maxAB}
\end{equation}
For the qubit-qubit case, the cross-terms $\sigma_i\otimes\beta_i$ in (\ref{rho}) can be expressed as $ \frac{1}{2}C_{ij}\sigma_i\otimes\sigma_j$, where $C_{ij}$ is a real matrix. Then, it was shown in \cite{HORODECKI1995340} that the maximum value of $\langle {\cal O}_{\rm Bell} \rangle$ is given by $2\sqrt{\kappa_1+\kappa_2}$, where $\kappa_1,\kappa_2$ are the largest eigenvalues of the $C^TC$ matrix. Such nice result cannot be extrapolated to the qubit-qudit case for various reasons. 
In particular, the freedom for the choice of the $B, B'$ observables is much greater, as they live in the space of $d\times d$ Hermitian matrices.

For this analysis it is convenient to define $\vec{r}_A$, $\vec{r}_{A'}$ and  
$\vec{r}_B$, $\vec{r}_{B'}$  vectors as 
\begin{eqnarray}
 \vec{r}_A  & = & \left(   \Tr{\sigma_1 A}, \Tr{\sigma_2 A},\Tr{\sigma_3 A}  \right ),  \nonumber
 \\
 \vec{r}_B & = & \left(\Tr{\beta_1 B}, \Tr{\beta_2 B},\Tr{\beta_3 B}\right )
 \label{rArB}
\end{eqnarray}
and similar expressions for $\vec{r}_{A'}$ and  $\vec{r}_{B'}$.  Note that these are real vectors from the Hermiticity of the involved matrices. Then Eq.(\ref{<OB>}) reads
\begin{equation}
\langle {\cal O}_{\rm Bell} \rangle = \frac{1}{2} \vec{r}_A (\vec{r}_B+ \vec{r}_{B^\prime}) + 
        \frac{1}{2} \vec{r}_{A^\prime} (\vec{r}_B- \vec{r}_{B^\prime})\ .
 \label{Bell_r}                                                  
\end{equation}
Incidentally,  this expression is explicitly invariant under  simultaneous rotations in the 3-spaces of the $\sigma_i$, $\beta_i$ matrices, which is in turn a consequence of the invariance of $\rho$, Eq.(\ref{rho}), under that operation. 
Now notice that $\frac{1}{2}\vec{r}_A$, $\frac{1}{2}\vec{r}_{A'}$ are unit vectors. This comes from $A, A'$ having eigenvalues $\{+1,-1\}$ and thus vanishing trace\footnote{ We do not consider the case of $A$ or $A'$ proportional to the identity, which leads to no Bell-violation \cite{LANDAU198754}.}, so they can be expressed as
\begin{equation}
A= \sum\frac{1}{2}\Tr{\sigma_i A}\sigma_i= \frac{1}{2}\vec{r}_A \vec{\sigma}
\label{A}
\end{equation}
and similarly for $A'$. 
Now, since $\Tr{A^2}=\Tr{{A'}^2} =\Tr{\mathbb{1}_2}=2$, we get $\|\vec{r}_{A}\|^2=\|\vec{r}_{A^\prime}\|^2=4$. Apart from that, the $\vec{r}_{A}, \vec{r}_{A^\prime}$ vectors are arbitrary since, for any choice of them, the corresponding $A, A'$ observables are given by (\ref{A}). This feature breaks down for higher-dimensional setups. Then, it is necessary to take a different basis on Alice's side for the expansion \eqref{rho} --e.g. the Gell-Mann matrices-- and, in addition, the $A, A'$ operators can be non-traceless. In consequence, there appear constraints not only on the modulus of $\vec r_A, \vec r_{A'}$, but also on their directions. Hence, the whole procedure is no longer practical.

Coming back to the $2\times d$ scenario, for a given pair $(B, B^\prime)$, the optimal choice of  $(A, A^\prime)$ is
$\vec{r}_A   \parallel   ( \vec{r}_B +  \vec{r}_{B^\prime}) $ and 
$ \vec{r}_{A^\prime} \parallel   ( \vec{r}_B -  \vec{r}_{B^\prime})$, so that 
\begin{equation}
\mathcal{B} =\, \max_{B,B'} \Bigl\{
\| \vec{r}_B +  \vec{r}_{B^\prime}\|  +  \| \vec{r}_B-  \vec{r}_{B^\prime}\| \Bigl\}.
 \label{maxBB} 
\end{equation}
As expected,  this expression is also invariant under 3-rotations. Unfortunately, the $\beta-$matrices do not follow the Pauli algebra, so a similar argument cannot be done for the $\vec{r}_{B}, \vec{r}_{B^\prime}$ vectors, in particular they do not have a fixed normalization. As already mentioned, this is part of the extra intricacy of the qubit-qudit case compared to the qubit-qubit one. In order to solve (\ref{maxBB}) it is useful the following lemma:

\begin{lemma}
\label{MaxRotation}
Let $\vec{v} ,\vec{w}$ be two arbitrary vectors. Consider a simultaneous rotation of both vectors within the plane they  span. Let $ \vec{v} (\varphi)$ and $\vec{w} (\varphi)$ be the rotated vectors with $\varphi$ characterizing the rotation angle. Then the following identity holds:
\begin{equation} 
\bigg( \| \vec{v}+  \vec{w}\|  + \| \vec{v} -  \vec{w}\|   \bigg)^2 = 4 \max_{ \varphi }  \bigg\{v_1(\varphi)^2 +w_2(\varphi)^2\bigg\},
 \nonumber
\end{equation}
where the subscripts 1, 2 denote the components of the vectors.
\end{lemma}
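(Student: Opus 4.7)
\emph{Proof plan.} I would exploit the full rotational symmetry of the identity. Both sides depend on $(\vec v,\vec w)$ only through rotation-invariant data: the left-hand side trivially (Euclidean norms are invariant) and the right-hand side because it is itself a maximum over all such rotations. The first step is therefore to discard the ambient embedding and work inside the two-dimensional plane $\Pi$ spanned by the two vectors. One quick check is needed to justify this reduction: the components of $\vec v,\vec w$ orthogonal to $\Pi$ vanish by definition, so the postulated planar rotation lifts to a proper isometry of the ambient space. After fixing any orthonormal basis of $\Pi$, I write $\vec v=(a_1,a_2)$, $\vec w=(b_1,b_2)$ and the statement becomes purely two-dimensional.

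The most economical bookkeeping I would adopt next is to identify $\mathbb R^2$ with $\mathbb C$ via $z_v\equiv a_1+ia_2$ and $z_w\equiv b_1+ib_2$, so that a rotation by $\varphi$ acts as multiplication by $e^{i\varphi}$ and $v_1(\varphi)=\operatorname{Re}(e^{i\varphi}z_v)$, $w_2(\varphi)=\operatorname{Im}(e^{i\varphi}z_w)$. Applying the elementary identities $\operatorname{Re}(u)^2=\tfrac12(|u|^2+\operatorname{Re} u^2)$ and $\operatorname{Im}(u)^2=\tfrac12(|u|^2-\operatorname{Re} u^2)$ will collapse $v_1(\varphi)^2+w_2(\varphi)^2$ into a $\varphi$-independent constant plus $\tfrac12\operatorname{Re}[e^{2i\varphi}(z_v^2-z_w^2)]$, whose maximum over $\varphi$ is simply the modulus of its complex amplitude.

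The closing step is algebraic: factor $z_v^2-z_w^2=(z_v-z_w)(z_v+z_w)$ and use $|z_v|=\|\vec v\|$, $|z_w|=\|\vec w\|$ and $|z_v\pm z_w|=\|\vec v\pm\vec w\|$. This will yield
\begin{equation*}
4\max_\varphi\{v_1(\varphi)^2+w_2(\varphi)^2\}=\|\vec v+\vec w\|^2+\|\vec v-\vec w\|^2+2\|\vec v+\vec w\|\,\|\vec v-\vec w\|,
\end{equation*}
which is exactly the expansion of $(\|\vec v+\vec w\|+\|\vec v-\vec w\|)^2$. I do not anticipate a genuine obstacle along this route; after the complex reformulation the whole argument reduces to one trigonometric maximisation followed by one algebraic factorisation, with the only subtlety being the initial reduction to $\Pi$. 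As a lower-tech alternative, one could instead expand $v_1(\varphi)^2+w_2(\varphi)^2$ directly via double-angle formulas to obtain a quartic polynomial identity in $a_1,a_2,b_1,b_2$ and verify it coefficient by coefficient, but that route is longer and offers no additional insight.
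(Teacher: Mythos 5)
Your proposal is correct and follows essentially the same route as the paper: both reduce $v_1(\varphi)^2+w_2(\varphi)^2$ to a $\varphi$-independent term plus a pure harmonic in $2\varphi$ and read off the maximum as constant plus amplitude. The only difference is in the closing bookkeeping: your complex-variable formulation identifies the amplitude directly as $\tfrac12|z_v^2-z_w^2|=\tfrac12\|\vec v+\vec w\|\,\|\vec v-\vec w\|$ by factorization together with the parallelogram law, whereas the paper checks that the maximized expression coincides with the expansion of $\left(\|\vec v+\vec w\|+\|\vec v-\vec w\|\right)^2$ by direct algebra.
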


\begin{proof} From the sum
\begin{equation}
    \begin{aligned}
        &4(v_1(\varphi)^2 +w_2(\varphi)^2)\\
        &=4(v_1 \cos \varphi + v_2 \sin \varphi )^2+
        4(-w_1 \sin \varphi + w_2 \cos \varphi )^2\\
        &=2 ( v_1^2+v_2^2+w_1^2+w_2^2)\\
        &+2( v_1^2-v_2^2-w_1^2+w_2^2)\cos (2 \varphi )+4 ( v_1 v_2 -w_1 w_2)  \sin (2 \varphi ) 
        \nonumber
    \end{aligned}
\end{equation}
we get that the maximum reads
\begin{equation}
    \begin{aligned}
       &4\max_{ \varphi }\left\{v_1(\varphi)^2 +w_2(\varphi)^2 \right\}\\
       &=2 ( v_1^2+v_2^2+w_1^2+w_2^2)\\
       &+\sqrt{4( v_1^2-v_2^2-w_1^2+w_2^2)^2 +16(v_1 v_2 -w_1 w_2)^2}
       \nonumber
    \end{aligned}
\end{equation}
which coincides with the expanded expression of $\left(\| \vec{v}+  \vec{w}\|  + \| \vec{v} -  \vec{w}\|\right)^2$. 
\end{proof}

The lemma holds when we allow for rotations in 3-space, $\vec{v} ,\vec{w}\rightarrow {\cal R}\vec{v} ,{\cal R}\vec{w}$, {\it i.e.}
\begin{equation}
 \Bigl( \| \vec{v}+  \vec{w}\|  + \| \vec{v} -  \vec{w}\|   \Bigl)^2 = 4\, \max_{ {\cal R} }  \Bigl\{({\cal R}\vec{v})_1^2 +({\cal R}\vec{w})_2^2 \Bigl\},
 \label{lemma3D}
\end{equation}
where ${\cal R}$ is an arbitrary 3-dimensional rotation, characterized by the three Euler angles. This becomes obvious by taking into account that the r.h.s. of this equation reaches its maximum when the two vectors have vanishing third component, $({\cal R}\vec{v})_3 =({\cal R}\vec{w})_3=0$, so that the problem reduces to the above rotation in the plane.

\vspace{0.2cm}
\begin{theorem}
\emph{(Main Theorem)}
\label{Lagrange}
Given a density matrix $\rho$ of a qubit-qudit system, as that in Eq.\eqref{rho}, then the maximum value  $\mathcal{B}=\max_{A,A', B, B'} \Tr{\mathcal{O}_{\rm Bell} \rho}$, with $\mathcal{O}_{\rm Bell}$ given in Eq.\eqref{OBell}, is
\begin{equation}
\boxed{   
\begin{aligned}
\mathcal{B}
&=
\dis{2\, \max_{{\cal R}} 
\sqrt{\| 
({\cal R} \vec{\beta})_1 \|_1 ^2 + \| ({\cal R}\vec{\beta})_2 
\|_1 ^2 } \, } \\
  & =  \dis{2\, \max_{{\cal R}} 
\left[
            \left(\sum_{i=1}^{d} |\lambda_i^{(1)}({\cal R})| \right) ^2  +
           \left( \sum_{i=1}^{d} |\lambda_i^{(2)}({\cal R})| \right) ^2
\right]^{1/2}}\ .
\label{generalresult}
\end{aligned}
}  
\end{equation}
Here, $\lambda_i^{(1,2)} ({\cal R})$ stand for the eigenvalues of the rotated $\beta$ matrices, $({\cal R}\vec{\beta})_1, ({\cal R} \vec{\beta})_2$, and ${\cal R}$ is a general $SO(3)$-rotation.
\end{theorem}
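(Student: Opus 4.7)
The plan is to start from Eq.~(\ref{maxBB}), which already encodes the optimization over Alice's observables $A,A'$, and reduce the remaining $(B,B')$ optimization to the boxed formula by combining the three-dimensional form of Lemma~\ref{MaxRotation} stated in Eq.~(\ref{lemma3D}) with a standard operator identity for dichotomic observables.

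First, I would square Eq.~(\ref{maxBB}) and apply Eq.~(\ref{lemma3D}) with $\vec v=\vec r_B$, $\vec w=\vec r_{B'}$, obtaining
\begin{equation*}
\mathcal{B}^2 \;=\; 4\,\max_{B,B',\,{\cal R}}\Bigl\{({\cal R}\vec r_B)_1^{\,2}+({\cal R}\vec r_{B'})_2^{\,2}\Bigr\}.
\end{equation*}
Because the two squared components depend on disjoint observables, I can pull $\max_{\cal R}$ out and split the inner maximization into two independent problems,
\begin{equation*}
\mathcal{B}^2 \;=\; 4\,\max_{\cal R}\Bigl\{\max_{B}\,({\cal R}\vec r_B)_1^{\,2}+\max_{B'}\,({\cal R}\vec r_{B'})_2^{\,2}\Bigr\}.
\end{equation*}
Linearity of the trace gives $({\cal R}\vec r_B)_k=\Tr{({\cal R}\vec\beta)_k B}$, where $({\cal R}\vec\beta)_k\equiv\sum_j{\cal R}_{kj}\beta_j$ is a Hermitian matrix; the remaining task is therefore to evaluate $\max_B\,\Tr{M B}$ for $M=({\cal R}\vec\beta)_k$, with $B$ ranging over Hermitian $d\times d$ matrices whose spectrum is contained in $\{+1,-1\}$.

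The one non-trivial step, and the main obstacle, is the operator identity
\begin{equation*}
\max_{B=B^\dagger,\,B^2=\mathbb{1}_d}\,\Tr{M B}\;=\;\|M\|_1\;=\;\sum_{i=1}^d|\lambda_i(M)|,
\end{equation*}
valid for any Hermitian $M$. I would prove it by spectral decomposition: writing $M=U\,\mathrm{diag}(\lambda_i)\,U^\dagger$ and defining $\tilde B=U^\dagger B U$, one has $\Tr{MB}=\sum_i\lambda_i\,\tilde B_{ii}$, and since the diagonal entries of a Hermitian matrix with spectrum in $\{\pm 1\}$ satisfy $|\tilde B_{ii}|\le 1$, the upper bound $\sum_i|\lambda_i|$ follows immediately. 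Saturation is realized by $\tilde B=\mathrm{diag}(\mathrm{sgn}\,\lambda_i)$, assigning $\pm 1$ arbitrarily on the kernel of $M$; this $\tilde B$ is a legitimate dichotomic observable, and the corresponding $B=U\tilde B U^\dagger$ attains the bound. Plugging back and taking a square root reproduces precisely the boxed expression, the radicand being the sum of the squared trace norms of the two rotated $\beta$-combinations.
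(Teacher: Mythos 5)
Your proposal is correct and follows essentially the same route as the paper: apply the rotated form of Lemma~\ref{MaxRotation} to Eq.~(\ref{maxBB}), rewrite $({\cal R}\vec r_B)_k=\Tr{({\cal R}\vec\beta)_k B}$, and maximize each trace independently over dichotomic $B,B'$ to obtain the trace norms $\|({\cal R}\vec\beta)_{1,2}\|_1$. The only difference is that you prove the key fact $\max_{B^2=\mathbb{1}_d}\Tr{MB}=\|M\|_1$ explicitly via spectral decomposition and the bound $|\tilde B_{ii}|\le 1$, whereas the paper simply cites it (Mirsky); your argument for it is sound.
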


\begin{proof}
Applying Lemma \ref{MaxRotation} to the CHSH expectation value, as given by Eq.(\ref{maxBB}), we get
\begin{equation}
\mathcal{B} = 2 \, \max_{B,B',\, {\cal R}} 
 \sqrt{ \left| ({\cal R}  \vec{r}_B )_1 \right |^2  +  \left| ({\cal R}  \vec{r}_{B^\prime} )_2 \right|^2}\ .
 \label{r_squared}
\end{equation}
From the definition of $\vec{r}_B$, Eq.(\ref{rArB}), $({\cal R} \vec{r}_B )_i= \Tr{({\cal R}\vec{\beta})_i\cdot B}$, and an analogous expression for $({\cal R} \vec{r}_{B'} )_i$ holds. Hence
\begin{equation}
\mathcal{B} =2 \max_{B,B', {\cal R}} 
 \sqrt{ \left| \Tr{({\cal R}  \vec{\beta})_1 \cdot B} \right |^2  +  \left| \Tr{({\cal R}  \vec{\beta})_2 \cdot B'} \right |^2}.
 \label{r_squared2}
\end{equation}
Now we take into account the following \cite{Mirsky1975}: for a generic Hermitian matrix, $M$, with eigenvalues $\lambda_i$, and an arbitrary Hermitian, involutory matrix $B$ (i.e. $B^2=\mathbb{1}_d$), it happens that $\max_{B} \Tr{M\cdot B}=\sum_i |\lambda_i|$.~\footnote{This is called the trace-norm or 1-norm of a matrix, $\|M\|_1=\Tr{\sqrt{M^\dagger M}}=\sum_i |\lambda_i|$, in analogy to the 1-norm of vectors.
}
This maximum is achieved when $B$ is aligned with $M$, i.e. when they are diagonalized by the same unitary matrix, and the signs of the $B$ eigenvalues ($1$ or $-1$) are chosen equal to the signs of the corresponding $\lambda_i$.

Given that this is precisely our case, since $B, B'$ in 
Eq.(\ref{OBell}) are arbitrary Hermitian involutory matrices\footnote{ Here we allow $B, B'$ to be $\pm \mathbb{1}_d$, which is the optimal choice when all $\lambda_i$ have the same sign (we comment below on the meaning of this case). }, then:
\begin{equation}
\mathcal{B}= 2\, \max_{{\cal R}} 
\sqrt{\| ({\cal R} \vec{\beta})_1 \|_1 ^2 + \| ({\cal R}\vec{\beta})_2\|_1 ^2 } \, 
\end{equation}
and the theorem is proven.
\end{proof}

 It follows that a necessary and sufficient condition for CHSH-nonlocality in qubit-qudit states is that the above expression (\ref{generalresult}) is larger than $2$.

Let us briefly comment on some aspects of this result.

\begin{itemize}

\item 
    When all the $\lambda_i^{(1)}({\cal R})$ and/or $\lambda_i^{(2)}({\cal R})$ at the maximum of Eq.(\ref{generalresult}) have the same sign, this entails setting either $B=\pm\mathbb{1}_d$ and/or $B'=\pm\mathbb{1}_d$, which is known to give no violation for CHSH-type inequalities \cite{LANDAU198754}.

\item
To see the computational advantage of the above expression, note the following. In this procedure, given a $\rho$ matrix, once it is expressed in the form (\ref{rho}), we have to perform a (usually numerical) maximization of Eq.(\ref{generalresult}). This implies to scan the three Euler angles of the ${\cal R}$ rotation, which is a very cheap computation. It should be compared with the $4+2d(d-1)$ parameters for each CHSH inequality in the initial expression (\ref{maxAB}). Even in the simplest qubit-qutrit case this represents 16 parameters.

\item The $A, A', B, B'$ observables that realize the maximum Bell-violation are straightforward to obtain. Once we have determined the matrices $({\cal R}\vec{\beta})_1$, $({\cal R}\vec{\beta})_2$ that maximize (\ref{generalresult}) we simply set 
\begin{equation}\hspace{1cm}
    B=U_1 D_1 U_1^\dagger,\quad B'=U_2 D_2 U_2^\dagger,
    \label{BfromBeta}
\end{equation}
where $U_{1,2}$ are unitary matrices diagonalizing $({\cal R}\vec{\beta})_{1,2}$, i.e.
$U_a^\dagger ({\cal R}\vec{\beta})_a U_a={\rm diag}\,(\lambda_i^{(a)})$, and $D_a={\rm diag}({\rm sign}\,[\lambda_i^{(a)}])$. The corresponding $A,A'$ observables are given by Eq.(\ref{A}), with $\vec{r_A}$, $\vec{r_{A'}}$ the unit vectors aligned along 
$( \vec{r}_B +  \vec{r}_{B^\prime})$,~
$ ( \vec{r}_B -  \vec{r}_{B^\prime})$ (see discussion after Eq.(\ref{A})), and 
\begin{equation}\hspace{0.7cm}
    \vec{r}_B  =  \left(\Tr{({\cal R}\vec{\beta})_1 B}, \Tr{({\cal R}\vec{\beta})_2 B}, \Tr{({\cal R}\vec{\beta})_3 B}
\right )
\end{equation}
and similarly for $B'$.  

\item 
Let us finally see that expression (\ref{generalresult}) is consistent with the qubit-qubit result (\ref{CHSH-abC_max}) obtained in ref.\cite{HORODECKI1995340}.

In such scenario, comparing  expressions  (\ref{generalrho2qu}) and (\ref{rho}) for $\rho$, the $\beta$ matrices read $\beta_0=\frac{1}{2}(\mathbb{1}_2+\sum_i B_i^-\sigma_i)$ and $\beta_i=\frac{1}{2}(B_i^+\mathbb{1}_2+\sum_j C_{ij}\sigma_j)$. 
On the other hand, assuming that the state violates a CHSH inequality, the corresponding observables $A, A', B, B'$  must have eigenvalues $\{+1,-1\}$. The other inequivalent possibility, namely one or more observables proportional to the identity, leads to no CHSH-violation  \cite{LANDAU198754}. In that case, the terms involving $B_i^\pm$ are irrelevant as they cancel in $\Tr{\rho {\cal O}_{\rm Bell}}$, Eq. (\ref{<OB>}).
Now, the (real) matrix $C$ can be diagonalized by two orthogonal transformations, ${\cal R}_A, {\cal R}_B\in O(3)$:
\begin{equation}\hspace{1cm}
    C= {\cal R}_A \Sigma {\cal R}_B^T,\quad \Sigma={\rm diag}\{\mu_1, \mu_2, \mu_3\},
\end{equation}
ordered as $\mu_1\geq\mu_2\geq\mu_3\geq0$. This is equivalent to perform appropriate changes of basis in the Alice and Bob Hilbert spaces. Hence, in this new basis
\begin{equation}\hspace{1cm}
\rho=\frac{1}{2}\left(
\mathbb{1}_2\otimes \mathbb{1}_2 
 + \sum_{i} \sigma_i\otimes \beta_i
 + \cdots \right)\ ,\end{equation}
where the dots denote terms which are irrelevant for the previous reasons, and $\beta_i=\frac{1}{2}\mu_i\sigma_i$ up to a sign\footnote{The presence of a negative sign depends on whether or not ${\cal R}_A,{\cal R}_B\in SO(3)$. Nevertheless, this sign is irrelevant for the rest of the reasoning. 
}. Now, from Eq.(\ref{generalresult}) we have to maximize $\| ({\cal R}\vec{\beta})_1 \|_1 ^2 + \| ({\cal R}\vec{\beta})_2 \|_1 ^2$ where ${\cal R}$ is an arbitrary $SO(3)$ rotation. Using the fact that the eigenvalues of $\vec{v}\cdot\vec{\sigma}$ are $\pm \|\vec{v}\|$ we get 
\begin{equation}\hspace{1cm}
\| ({\cal R} \vec{\beta})_i \|_1 ^2=
\sum_j \mu_j^2 |{\cal R}_{ij}|^2\ ,
\end{equation}
so the maximum value of $\| ({\cal R} \vec{\beta})_1 \|_1 ^2 + \| ({\cal R}\vec{\beta})_{2} \|_1 ^2$ occurs for ${\cal R}_{13}={\cal R}_{23}=0$. Then
\begin{equation}\hspace{1cm}
\begin{aligned}
    \max_{{\cal R}}\{\| ({\cal R} \vec{\beta})_1 \|_1 ^2 + \| ({\cal R}\vec{\beta})_2 \|_1 ^2\}&=
\sum_{i=1,2}\sum_j \mu_j^2 |{\cal R}_{ij}|^2\\
    &=\mu_1^2 +\mu_2^2 
\end{aligned}
\end{equation}
(independent of ${\cal R}_{ij}$). Plugging this result in (\ref{generalresult}) we recover Eq.(\ref{CHSH-abC_max}).
\end{itemize}

\section{Applications}
\label{sec:applications}

\subsection{Lower and upper bounds on Bell violation}\label{sec:3}

From the general expression for the maximal Bell violation, Eq.(\ref{generalresult}), we can easily extract simple lower and upper bounds on $\mathcal{B}$.
They are useful as a quick exploration of nonlocality.

The lower bound comes from simply taking ${\cal R}=\mathbb{1}_3$. In other words, once the density matrix has been expressed as in Eq.(\ref{rho}), we can assure that
\begin{equation}
    \begin{aligned}
        \mathcal{B}  
        &  \, \geq  \,  2 
\sqrt{\| \beta_1 \|_1 ^2 + \| \beta_2 \|_1 ^2 } \\ 
  &  \, =  \,   2 
\left[
           \left(\sum_{i=1}^{d} |\lambda_i^{(1)}| \right) ^2  +
           \left( \sum_{i=1}^{d} |\lambda_i^{(2)}| \right) ^2
\right]^{1/2},
\label{lowerbound}
    \end{aligned}
\end{equation}
where in this case $\lambda_i^{(1,2)}$ stand for the eigenvalues of the initial $\beta_1, \beta_2$ matrices (no rotation applied). More precisely, $\beta_1, \beta_2$ correspond to the beta matrices with larger trace-norm.

We find that this bound is typically close to the actual value and becomes tighter as the state becomes more mixed. In Section \ref{sec:random}, we consider a random set of states in different dimensions. For $d=4$, the lower bound (\ref{lowerbound}) provides an estimate of $\mathcal{B} $ within a 10-15 $\% $ error. For the
d = 10 set, which typically has lower purity, the precision of the
approximation is around 5$\% $.

In order to get an upper bound on $\mathcal{B}$ from Eq.(\ref{generalresult}), we use the inequality $\norm{\,\cdot\,}_1^2\leq d\, \norm{\,\cdot\,}_2^2$ involving the 1 and 2-norm over $d\times d$ matrices~\footnote{The 2-norm of a squared matrix is defined by $\|M\|_2^2=\Tr{MM^\dagger}=\sum_i |\lambda_i|^2$, in analogy to the 2-norm of vectors.}, 
so that 
\begin{equation}
    \begin{aligned}
        & \sum_{a=1}^2 \| ({\cal R} \vec{\beta})_a \|_1^2   \leq   \sum_{a=1}^3 \| ({\cal R} \vec{\beta})_a \|_1^2  \\
        &\leq d\,\sum_{a=1}^3 \| ({\cal R} \vec{\beta})_a \|_2^2 \ = d\,\sum_{a=1}^3 \| \beta_a \|_2^2.
    \label{chain}
    \end{aligned}
\end{equation}
The equality comes from the fact that the last expression is invariant under $O(3)$ rotations, so we can take the initial $\beta$-matrices to evaluate the upper bound. Hence,
\begin{equation}\hspace{-0.2cm}
 \mathcal{B}\leq  
2 \sqrt{d\sum_{a=1}^3\| \beta_a \|_2 ^2}=2\sqrt{d}
\left[\sum_{a=1}^3
           \left(\sum_{i=1}^{d} |\lambda_i^{(a)}|^2 \right) 
\right]^{1/2}.
\end{equation}
An illustration of the goodness of this bound is provided in Section \ref{sec:5}, see Fig. \ref{fig:E1}.

Gathering the lower and upper bounds derived in this section, we present the following result.

\begin{proposition}
Under the same hypotheses as in Theorem \ref{Lagrange} the following inequalities are satisfied
\begin{equation}
2 
\sqrt{\sum_{a=1}^2 \| \beta_a \|_1 ^2}
    \leq\mathcal{B}\leq  
2\sqrt{d\sum_{a=1}^3\| \beta_a \|_2 ^2}\, .
\label{bounds}
\end{equation}
\end{proposition}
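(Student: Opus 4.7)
The plan is to derive both inequalities directly from the formula for $\mathcal{B}$ given in Theorem \ref{Lagrange},
\begin{equation*}
\mathcal{B} = 2\,\max_{{\cal R}}\sqrt{\|({\cal R}\vec\beta)_1\|_1^2 + \|({\cal R}\vec\beta)_2\|_1^2}\,,
\end{equation*}
without having to revisit the variational problem over $A,A',B,B'$. The lower bound is obtained by evaluating the maximand at one specific rotation, while the upper bound is obtained by a chain of elementary majorizations that culminates in a quantity independent of ${\cal R}$.

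For the lower bound, I would simply observe that the maximum over $SO(3)$ is bounded below by the value at ${\cal R}=\mathbb{1}_3$, which yields $\mathcal{B}\geq 2\sqrt{\|\beta_1\|_1^2+\|\beta_2\|_1^2}$. Since the labelling of the three $\beta$ matrices is conventional, one may (and should, to get the tightest bound) permute labels so that $\beta_1,\beta_2$ carry the two largest trace-norms among $\{\beta_1,\beta_2,\beta_3\}$; such a permutation is induced by a rotation, so it does not leave the admissible region. This produces the left inequality in Eq.(\ref{bounds}).

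For the upper bound, the plan is a three-step majorization, all applied inside the square root. First, enlarge the sum by adding the non-negative term $\|({\cal R}\vec\beta)_3\|_1^2$, reaching $\sum_{a=1}^{3}\|({\cal R}\vec\beta)_a\|_1^2$. Second, apply the matrix norm inequality $\|M\|_1^2\leq d\,\|M\|_2^2$ componentwise; this is nothing but Cauchy--Schwarz applied to the $d$ eigenvalues, $(\sum_i|\lambda_i|)^2\leq d\sum_i|\lambda_i|^2$. Third, use the $O(3)$-invariance of $\sum_{a=1}^{3}\|({\cal R}\vec\beta)_a\|_2^2$: expanding $({\cal R}\vec\beta)_a = \sum_i {\cal R}_{ai}\beta_i$ and using $\sum_a {\cal R}_{ai}{\cal R}_{aj}=\delta_{ij}$ gives $\sum_a\Tr[({\cal R}\vec\beta)_a^2] = \sum_{i}\Tr(\beta_i^2) = \sum_{a}\|\beta_a\|_2^2$. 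The resulting bound is ${\cal R}$-independent, so taking $\max_{{\cal R}}$ does nothing and one concludes $\mathcal{B}\leq 2\sqrt{d\sum_{a=1}^{3}\|\beta_a\|_2^2}$.

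There is essentially no hard step here: the proposition is a direct corollary of the main theorem. The only place that requires a moment's thought is verifying the rotation invariance of the 2-norm sum, but this reduces to the orthogonality of ${\cal R}$. The two ingredients $\|M\|_1^2\leq d\|M\|_2^2$ and specialization of a max at a chosen point are standard, so the proof will fit in a few lines.
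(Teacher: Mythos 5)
Your proposal is correct and follows essentially the same route as the paper: the lower bound by evaluating the maximand at ${\cal R}=\mathbb{1}_3$ (the paper likewise notes one should label $\beta_1,\beta_2$ as the matrices of largest trace-norm), and the upper bound via the same chain $\sum_{a=1}^2\|({\cal R}\vec\beta)_a\|_1^2\leq\sum_{a=1}^3\|({\cal R}\vec\beta)_a\|_1^2\leq d\sum_{a=1}^3\|({\cal R}\vec\beta)_a\|_2^2=d\sum_{a=1}^3\|\beta_a\|_2^2$ using $\|M\|_1^2\leq d\|M\|_2^2$ and the $O(3)$-invariance of the 2-norm sum. The only cosmetic nuance is that a pure transposition of axes has determinant $-1$, so to stay in $SO(3)$ one composes it with a sign flip of one component, which leaves the trace-norms unchanged and hence does not affect your argument.
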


\vspace{0.3cm}

\subsection{Purity and non-locality} \label{sec:purity}

Purity is simply defined as $\Tr \rho^2$,
and provides a simple measure of how pure a state is \cite{Benenti}, in particular $\Tr{\rho^2}=1$ ($\Tr{\rho^2}=1/D$) for a pure (completely mixed) state, where $D$ is the dimension of the Hilbert space. 
Clearly, a maximally-mixed bipartite state is separable (and thus local in the Bell sense), since the density matrix is proportional to the identity. In contrast, a pure state, if entangled, is non-local as well \cite{GISIN1991201,Popescu:2002won}. Then, one can wonder what is the minimum degree of purity a state should have in order to present non-local correlations.

This question can be adressed for qubit-qudit systems, using our previous results.
From the expression of $\rho$ in Eq.(\ref{rho}) it follows that 
\begin{equation}
\Tr{\rho^2}=\frac{1}{2}\left[\Tr{\beta_0^2}\ + \sum_{i=1,2,3}\Tr{\beta_i^2}\right].
\end{equation}
Here $\beta_0=\rho_B=\Tr_A\{\rho\}$ is the reduced density matrix in Bob's side, note that $\Tr{\rho_B^2}\geq1/d$. On the other hand, $\sum_{i=1,2,3}\Tr{\beta_i^2}= \,\sum_{i=1}^3 \| \beta_i \|_2^2$. Then, using relations (\ref{chain}) and (\ref{generalresult}) we get
\begin{equation}
    \begin{aligned}
        \Tr{\rho^2}&\geq  \frac{1}{2}\left[\Tr{\rho_B^2} + \frac{1}{4d}\mathcal{B}^2\right]\\[2mm]
        & \geq  \frac{1}{2d}\left[1 + \frac{1}{4}\mathcal{B}^2\right]\ ,
    \end{aligned}
\end{equation}
and thus a bound on the purity of $\rho$ in order to have CHSH violation ($\mathcal{B}>2$):
\begin{equation}\hspace{-0.25cm}
\text{Bell-Violation}\ \Rightarrow\ \Tr{\rho^2}>\frac{1}{2}\left[\Tr{\rho_B^2}+\frac{1}{d}\right] \geq \frac{1}{d}\ .
\end{equation}
Notice here that generically $\Tr{\rho^2}\geq1/(2d)$.

\subsection{Non-locality of $2\times d$ Random Matrices}
\label{sec:random}

As it is well known, for a bipartite quantum system most pure states are entangled and thus violate Bell-inequalities. The situation is more complicated to analyze (and actually unknown) for general mixed states. 

Our main result on CHSH-violation, Eq.(\ref{generalresult}),
allows to explore the statistics of non-locality in different sets of (generically mixed) qubit-qudit states, in particular in random mixed states. As a matter of fact, there is not a unique way to define a ``random mixed state", since it depends on the the probability measure used (the following results are however similar in all cases).
A well motivated one is the Random  Bures set. It can be generated from the Ginibre ensemble and the random unitary matrix
ensemble -- distributed according to the Haar measure --.    Let $G$ and $U$ be two $ D \times D$ random matrices of these ensembles. Then
\begin{equation}
\rho_{\rm Bures} = \frac{ (\mathds{1} + U) G G^\dagger  (\mathds{1} + U^\dagger)}{\Tr{(\mathds{1} + U) G G^\dagger  (\mathds{1} + U^\dagger)}}
\end{equation}
provides a random $ D \times D$ Bures state \cite{Al_Osipov_2010}.

In order to study the statistics of CHSH-nonlocality of these states for qubit-qudit (where $D=2d$)
we have generated sets of $\sim 10,000$ density matrices for different $d$ values, and evaluated the optimal Bell operator value for each state. 
The corresponding probability distributions of $\mathcal{B}$ are closed to normal, as displayed in Fig 1. Indeed, the value $\langle {\cal O}_{\rm Bell} \rangle=2$ is always achievable by allowing all the observables to be fully degenerate, i.e. trivial (which we did not for the Alice's ones). Hence, the really meaningful statistics corresponds to the nonlocal region, $\mathcal{B}>2$.
Clearly, the higher $d$, the less likely for a mixed state to present nonlocality.
More precisely, for $d=2, 4$ the $p-$value of non-locality is $9.1\%, 1.4\%$ respectively. For $d\geq 10$, such probability drops below $10^{-4}$.
\begin{figure}[ht!]
   \begin{center}
  \includegraphics[scale=0.65]{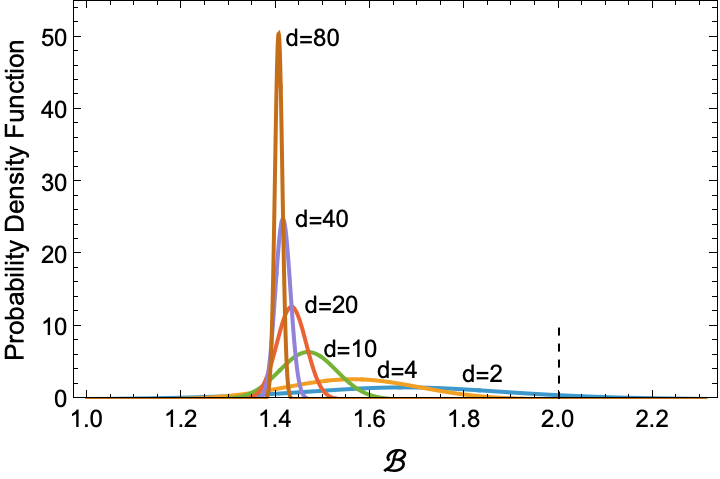}
     \end{center}
	\caption{From right to left, probability distribution of $\mathcal{B}$
	 for $d= 2, 4, 10,  20, 40, 80$. 
   }
    \label{fig:px}
\end{figure}

\subsection{Embeddings}\label{sec:4}
Having a recipe for the optimal CHSH-violation for $2 \times d$ systems allows us to address the following question: if we embed Bob's Hilbert space in one of  larger dimension, is it possible to improve the amount of Bell-violation by a suitable choice of the new (higher dimensional) $\tilde B, \tilde B'$ observables? As we are about to see, the answer is negative.
Let us start with a generic  state in a ${\cal H}_2 \otimes {\cal H}_{d_1}$ Hilbert space, characterized by a density matrix
\begin{equation}
    \rho = \dfrac{1}{2}\left(\openone_2 \otimes \beta_0+\sigma_1 \otimes \beta_1+\sigma_2 \otimes \beta_2+\sigma_3 \otimes \beta_3 \right),
\end{equation}
where $\{\beta_0,\beta_i\}$ are $d_1\times d_1$ matrices. Now let us consider Bob's Hilbert space as part of a higher dimensional one, ${\cal H}_{d_1}\subset{\cal H}_{d_2}$ with
$d_2>d_1$. Thus we embed the above state in the new Hilbert space by considering the $\{\beta_0,\beta_i\}$ matrices as the upper-left block of a block diagonal $d_2\times d_2$ matrix:
\begin{equation}
\begin{aligned}
    \beta_0\rightarrow\tilde\beta_0 
    & = 
    \left(\begin{array}{cc}
      \beta_0   & \mathbb{0}_{d_1\times (d_2-d_1)} \\
      \mathbb{0}_{(d_2-d_1)\times d_1} & \mathbb{0}_{(d_2-d_1)\times (d_2-d_1)} 
    \end{array}\right), \\[4mm]    
\beta_i\rightarrow\tilde\beta_i & =  \left(\begin{array}{cc}
      \beta_i   & \mathbb{0}_{d_1\times (d_2-d_1)} \\
      \mathbb{0}_{(d_2-d_1)\times d_1} & \mathbb{0}_{(d_2-d_1)\times (d_2-d_1)} 
    \end{array}\right),
    \end{aligned}
    \label{betas}
\end{equation}
where the $\mathbb{0}$ matrices have all entries vanishing. In terms of the higher-dimension observables and $\beta-$matrices, Eq.(\ref{r_squared2}) reads:
\begin{equation}
\mathcal{B} =2 \max_{B,B', {\cal R}} 
 \sqrt{ \left| \Tr{({\cal R}  \vec{\beta})_1 \cdot B} \right |^2  +  \left| \Tr{({\cal R}  \vec{\beta})_2 \cdot B'} \right |^2}.
 \label{r_squared3}
\end{equation}
Note that the ${\cal R}  \vec{\tilde\beta}$ matrices are block-diagonal, with the same texture of zeroes as matrices (\ref{betas}). Hence, they have the same $d_1$ eigenvalues as ${\cal R}  \vec{\beta}$ plus $d_2-d_1$ zeroes. Therefore, for a given rotation ${\cal R}$, the {\em optimal} choice for $\tilde B, \tilde B'$ in Eq.(\ref{r_squared3}) yields the same result as the optimal choice in the $ {\cal H}_2 \otimes {\cal H}_{d_1}$ system.

\subsection{A qubit-qutrit case study}\label{sec:5}

To illustrate the use of the general result on the maximal CHSH-violation (\ref{generalresult}) let us consider an example in the context of the qubit-qutrit system. As it is well known, for mixed states entanglement does not necessarily leads to violation of quantum realism (i.e. Bell-violation). A popular example of this fact in the qubit-qubit case are the Werner states, $\rho=\frac{1}{4}(
\mathbb{1}_2\otimes \mathbb{1}_2 - \eta \sum_{i} \sigma_i\otimes \sigma_i)$, which for $1/3<\eta\leq 1/\sqrt{2}$ are entangled but do not violate any Bell inequality. For a qubit-qutrit system we can perform a similar analysis for the CHSH scenario, using both our result (\ref{generalresult}) and the fact that in this case the Peres-Horodecki \cite{PhysRevLett.77.1413,Horodecki:1996nc} criterion, i.e. the existence of some negative eigenvalue of the  partially transposed matrix $\rho^{T_2}$,
provides a necessary and sufficient condition for entanglement. For the sake of concreteness, let us consider the qubit-qutrit state
\begin{equation}
\rho=x|\psi_1\rangle\langle\psi_1|
 \ + \  y|\psi_2\rangle\langle\psi_2| \ +\  z|\psi_3\rangle\langle\psi_3|,
\end{equation}
where $0\leq (x,y,z)\leq 1$ with $x+y+z=1$, and (in an obvious notation)
\begin{eqnarray}
&& |\psi_1\rangle=\frac{1}{\sqrt{2}}\left(|00\rangle+|11\rangle\right),\ \
|\psi_2\rangle=\frac{1}{\sqrt{2}}\left(|01\rangle+|12\rangle\right), \nonumber \\
&& |\psi_3\rangle=\frac{1}{\sqrt{2}}\left(|02\rangle+|10\rangle\right).
\end{eqnarray}
Explicitly,
\begin{equation}
\rho = \frac{1}{2} \left( \! \begin{array}{cccccc}
x & 0 & 0 & 0 & x & 0 \\
0 & y & 0 & 0 & 0 & y \\
0 & 0 & 1-x-y &1-x-y & 0 & 0  \\
0 & 0 & 1-x-y & 1-x-y & 0 & 0 \\
x & 0 & 0 & 0 & x & 0 \\
0 & y & 0 & 0 & 0 & y 
\end{array} \! \right) \,.
\label{rho 2x3}
\end{equation}
The physical region, where $\rho$ is positive definite, corresponds to $x+y\leq 1$ (triangle in Fig.\ref{fig:E1}).
Using the Peres-Horodecki criterion, it is easy to check that $\rho$ is entangled for any value of $x,y$, except for $x=y=1/3$. 
Fig.\ref{fig:E1}, left panel, shows the value of the logarithmic negativity $E=\log_2\left(\|\rho^{T_2}\|_1\right)$ in the $x-y$ plane. The logarithmic negativity, which provides a 
sound measurement of entanglement \cite{Vidal:2002zz,
Plenio:2005cwa}, is greater than 0 in the whole physical region except at that particular point. 

\begin{figure}[ht!] 
\begin{center}
\includegraphics[width=7.8cm,clip=]{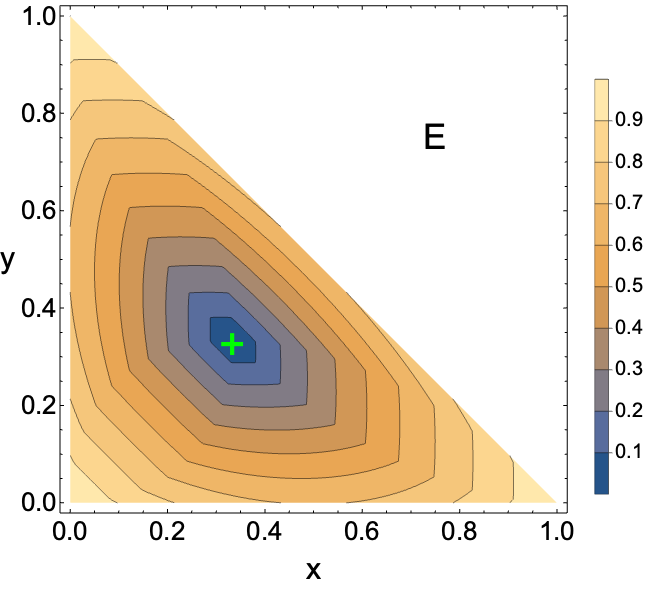}
\includegraphics[width=7.8cm,clip=]{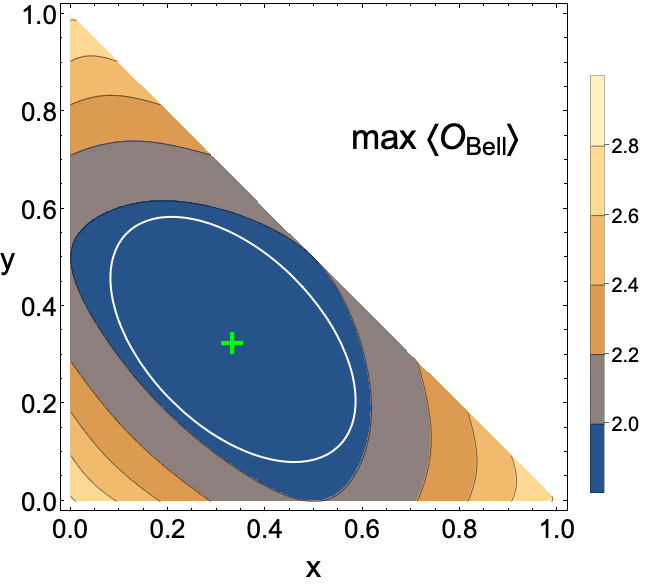}
\caption{Values of the logarithmic negativity. $E=\log_2\left(\|\rho^{T_2}\|_1\right)$ (upper panel) and 
$\mathcal{B}$ (lower panel) for the qubit-qutrit model described by the density matrix of Eq.(\ref{rho 2x3}). The model is entangled ($E>0$) in the whole physical region, except for $x=y=1/3$, while it violates local realism for $\mathcal{B}>2$. The area enclosed by the white line corresponds to the region where CHSH violation is ruled out by the upper bound in Eq.(\ref{bounds}).} 
\label{fig:E1}
\end{center}
\end{figure} 

For the analysis of the CHSH-violation, we first express $\rho$ in the form (\ref{rho}), which amounts to the following $\beta-$matrices:
\begin{equation}
\begin{aligned}
\beta_0 &= \frac{1}{2}
\left(
\begin{array}{ccc}
 1-y&0&0\\
 0&x+y&0\\ 
 0&0&1-x 
 \end{array}
\right), \\[2mm]
\beta_1 & =\frac{1}{2}
\left(
\begin{array}{ccc}
0&x&-x-y+1\\
x&0&y\\ 
-x-y+1&y&0 
\end{array}
\right),  \\[2mm]
\beta_2 & = \frac{1}{2}
\left(
\begin{array}{ccc}
0&i x&i (x+y-1) \\
-i x&0&i y\\
-i (x+y-1)&-i y&0
\end{array}
\right),\\[2mm]
\beta_3 &=\frac{1}{2}
\left(
\begin{array}{ccc}
2 x+y-1&0&0\\0&y-x&0\\
0&0&-x-2 y+1
\end{array}
\right).
\label{betas23}
\end{aligned}
\end{equation}
Plugging these expressions in
Eq.(\ref{generalresult}) and performing a simple numerical 
optimization we can obtain the maximal Bell-violation  in the $x-y$ plane, which is shown in Fig.\ref{fig:E1}, 
right panel. Similarly to the Werner qubit-qubit states, there is a sizeable region in which the state is entangled but $\mathcal{B}\leq 2$.

\section{Summary and conclusions}
\label{sec:6}
We have considered the violation of CHSH-like inequalities in the context of a qubit-qudit system with arbitrary dimension. These inequalities represent a crucial test of local realism, i.e the possibility that the outcomes of physical measures on the system could be reproduced by a (classical) theory of hidden variables. The violation of such inequalities requires that the state is entangled, but (for mixed states) the opposite is not necessarily true. In previous literature \cite{Pironio_2014} it was shown that for these systems in a bipartite scenario in which one party has binary inputs and outcomes, the ``classical" polytope, i.e. the region of probability distribution of  observables $A, A', B, B'$ which is compatible with local realism, is bounded by CHSH-type \cite{Clauser:1969ny} inequalities. However, given a state $\rho$, this does not solve the problem of determining the maximum CHSH violation and thus whether the system can be described by a classical (local-realistic) theory. The usual recipes for a qubit-qubit system \cite{HORODECKI1995340} cannot be applied beyond the lowest dimensionality. Hence, in principle one should explore all possibilities for the $A, A', B, B'$ observables involved in a CHSH inequality, an expensive computational task, which entails to optimize $\sim 2d^2$ parameters and thus increases quickly with the dimension of the qudit.

In this paper we have addressed the task of evaluating the maximal CHSH violation for a generic $\AB{given}$ qubit-qudit state, obtaining easily computable expressions. Our central result, given in Eq.(\ref{generalresult}), generically amounts to a simple optimization in three angles, independently of the qudit dimension. Moreover, once the maximum value is obtained, it automatically allows to construct the observables $A,A',B,B'$ for which the optimal CHSH value is attained.
We also give lower (and usually close to optimal) bounds and upper bounds on the Bell-violation, which can be immediately computed. We have applied our general results to address a number of issues. Namely, we have obtained a bound on the degree of purity required in a system to exhibit nonlocality. Also, we have studied the statistics of nonlocality in random density matrices.
Besides, we have shown that it is not possible to improve the amount of CHSH-violation by embedding Bob's Hilbert space in one of larger dimension. 
Finally, as an example of the use of our results we have considered a 2-parameter family of density matrices in the context of a qubit-qutrit system and determined the region of such parameter space in which the state is entangled and the region where local realism is violated, showing that both are correlated but the former is broader than the latter.

The results presented here can be used for any qubit-qudit system, independently of its physical nature; e.g. in the analysis of non-local correlations in top-$W$ \cite{Aguilar-Saavedra:2023hss,Subba:2024mnl} or photon-$Z$ production \cite{Morales:2023gow, Morales:2024jhj} at the LHC,
electron-nuclear spin qudit systems \cite{Macedonio:2025omv}
or even (in the large-$d$ limit) hybrid discrete-continuous systems such as a cavity atom-light system\cite{Halder:2023iod,Bernal:2024ege}. 

Furthermore, it would be interesting to investigate whether a theorem similar to Theorem \ref{Lagrange} can be formulated for other types of (2-2-2) inequalities, such as the family of Tilted CHSH Inequalities \cite{PhysRevLett.108.100402} (which includes the CHSH inequality as a special case). Concerning this point, it was shown in \cite{PhysRevLett.132.250205} that the maximum quantum value (Tsirelson's bound \cite{Cirelson:1980ry}) for this family is asymptotically inequivalent to that attained via Hardy's nonlocality \cite{PhysRevLett.68.2981}. This result was obtained using the explicit quantum realization that reaches the Tsirelson bound, namely the maximally entangled state and the corresponding observables for qubit-qubit states.
However, for qubit-qudit  it is possible to hit the quantum bound not only with the
Bell state but also with mixed states 
\cite{PhysRevLett.68.3259}. Then Theorem \ref{Lagrange} could help to characterize all such mixed states.  Hence, following the lines  of ref.~\cite{PhysRevLett.132.250205}, it would be of high interest to check whether these new higher-dimensional  realizations of the quantum bound are still asymptotically inequivalent to those of Hardy's nonlocality.

\section*{Acknowledgements}

	We are grateful to  J.A. Aguilar-Saavedra for useful discussions. The authors acknowledge the support of the Agencia Estatal de Investigacion through the grants ``IFT Centro de Excelencia Severo Ochoa CEX2020-001007-S" and PID2022-142545NB-C22 funded by MCIN/AEI/10.13039/501100011033 and by ERDF. The work of A.B. is supported through the FPI grant PRE2020-095867 funded by MCIN/AEI/10.13039/501100011033. \\

{\em Data availability:} The data that support the findings of this article are not publicly available. The data are available from the authors upon reasonable request.

\appendix
\section*{Appendix A}
\label{POVMAppend}
In this Appendix we show the equivalence between the CHSH inequality in term of probabilities with that of the CHSH inequality as a bound on the expectation value of an Hermitian operator.

The CHSH inequality as a function of the probabilities of getting outcomes $a,b=\pm1$ given inputs $x,y=0,1$ for Alice and Bob reads
\begin{equation}\label{CHSHExp}
    E (0,0)+ E (0,1)+E (1,0)-E (1,1) \leq 2,
\end{equation}
where we denote $E(x,y)= P(1,1|x,y)-P(1,-1|x,y)-P(-1,1|x,y)+P(-1,-1|x,y)$ and $P(a,b|x,y)$ is the probability of getting outcomes $a,b=\pm1$ given inputs $x,y=0,1$. Each probability is computed by the Born rule via
\begin{equation}\label{BornRule}
   P(a,b|x,y)=\Tr{\rho\left( F_{a|x}^A\otimes F_{b|y}^B\right)},
\end{equation}
where $F_{a|x}^A$ and $ F_{b|y}^B$ are respectively the positive semi-definite operators for both Alice's and Bob's POVMs (not necessarily PVMs) associated with outcomes $a,b$ and inputs $x,y$. By definition of POVM, the following identities hold:
\begin{equation}\label{identity}  
    F_{1|x}^A+F_{-1|x}^A=\openone, \quad F_{1|y}^B+F_{-1|y}^B=\openone,
\end{equation}
with $x,y=0,1$. Replacing the expression of the probabilities \eqref{BornRule} into Eq.~\eqref{CHSHExp} yields 
\begin{equation}
\expval{{\cal O}_{\rm Bell}}=\expval{A \otimes (B+B') + A'\otimes (B-B')}\leq 2,
    \label{CHSHPOVMs2}
\end{equation}
 with $\langle A\otimes B\rangle=\Tr{\rho \left(A\otimes B\right)}$ and
\begin{equation}
\begin{aligned}
     A=F_{1|0}^A-F_{-1|0}^A=\openone-2F_{-1|0}^A,\\
     B=F_{1|0}^B-F_{-1|0}^B=\openone-2F_{-1|0}^B, 
\end{aligned}
\end{equation}
where in the second step of each equality we have used (\ref{identity}).
Analogous relations define $A'$ and $B'$ for $x,y=1$. Since all $F_{a|x}^A$ and $F_{b|y}^B$ are positive semi-definite operators satisfying (\ref{identity}),
their eigenvalues are all non-negative and bounded from above by 1. In consequence, from the definition of the operators $A$ and $B$ (and, in the same fashion, for $A'$ and $B'$), we deduce that their eigenvalues have to be contained in the $[-1,1]$ interval. For instance, when dealing with PVMs the $F$ operators are projectors and the eigenvalues of $A,A',B,B'$ are exactly $\pm 1$.

\bibliographystyle{style2.bst}  

\bibliography{references}

\end{document}